\useunder{\uline}{\ul}{}
\newtheorem{proposition}{Proposition}
\renewenvironment{proof}[1][Proof.]
{
    \raggedright
    \textbf{#1}
}
{
    $\square$
}
\title{An efficient branch-and-cut algorithm for the parallel drone scheduling traveling salesman problem}
    \author[add1]{Minh Anh Nguyen}
	\ead{anh.nguyenminh@phenikaa-uni.edu.vn}
	\author[add1]{Hai Long Luong}
	\author[add1]{Minh Ho{\`a}ng H{\`a}\corref{cor1}}
	\ead{hoang.haminh@phenikaa-uni.edu.vn}
	\author[add2]{Ha-Bang Ban}
	\address[add1]{\ ORLab, Faculty of Computer Science, Phenikaa University, Hanoi, Vietnam}
	\address[add2]{\ School of Information and Communication Technology, Hanoi University of Science and Technology, Vietnam}
\begin{document}
\begin{frontmatter}

\begin{abstract}

We propose an efficient branch-and-cut algorithm to exactly solve the parallel drone scheduling traveling salesman problem. Our algorithm can find optimal solutions for all but two existing instances with up to 229 customers in a reasonable running time. To make the problem more challenging for future methods, we introduce two new sets of 120 larger instances with the number of customers varying from 318 to 783 and test our algorithm and investigate the performance of state-of-the-art metaheuristics on these instances. 

\end{abstract}
\begin{keyword}
Parallel drone scheduling; traveling salesman problem; branch and cut; benchmark instances
\end{keyword}
\end{frontmatter}

\section{Introduction}
The goal of this paper is to study the Parallel Drone Scheduling Traveling Salesman Problem (PDSTSP) defined as follows. We are given a complete undirected graph $G=(V, E)$ with the set of vertices $V=\{0,1,...,n, n+1\}$ and the set of edges $E=\{(i, j): i, j \in V, i < j\}$. Vertex 0 represents the depot while vertex $n+1$ is a copy of the depot vertex and is defined for the modeling purpose. The remaining vertices represent the set of customers $V_c$. At the depot, there is a truck and a set $D$ of $m$ drones available for delivery tasks. Denote a subset of \textit{drone-eligible} customers $V_{d}\subseteq V_c$, whose distance from the depot and parcel weight can be served by drones. The set of customers which must be served by the truck is denoted by $V_{t}=V_c \setminus V_d$. The travel time from node $i$ to node $j$ of the truck is denoted as $t_{ij}$. The time required by a drone to service customer $i$ is denoted as $t'_i$, which is calculated by doubling the time of flying from the depot to customer $i$. The PDSTSP consists in constructing a set of routes for the truck and the fleet of drones such that: (i) every customer vertex is visited exactly once by either truck or drone; (ii) the truck route starts and ends at the depot; (iii) the drone performs back-and-forth trips from the depot to serve several customers (one customer for each single trip); (iv) the completion time required to serve all customers and return all vehicles to the depot, is minimized.

The PDSTSP has received a significant attention from the literature since its first introduction in \citet*{murray2015flying}. The authors propose a Mixed Integer Linear Programming (MILP) formulation to model the problem and design a simple heuristic. The latter method divides the customers into two subsets, one for the truck and one for the drones. Two subproblems: the TSP and the Parallel Machine Scheduling (PMS) problem, defined on two customer sets, are separately solved to construct the truck tour and drone tours, respectively. In addition, an improvement step is performed to reassign customers either to the drone set or to the truck to better balance the truck and drones' activities. The solution approaches are tested on small instances with up to 20 customers.

Many attempts have been done to efficiently solve larger PDSTSP instances. \citet*{mbiadou2018iterative} propose the first metaheuristic, called iterative two-step heuristic method. The first step constructs a customer sequence (also called a giant or TSP tour) that visits all the customers. In the second step, a dynamic programming is used to split the giant tour into a truck tour and a set of trips for the drones, creating a complete PDSTSP solution. The method is tested on newly generated instances with up to 229 customers. \citet*{dell2020matheuristic} later propose a matheuristic in which the authors use the same idea as found in \cite*{mbiadou2018iterative}. However, instead of dynamic programming, a MILP approach is used to handle the second step. The obtained results show that the algorithm is slower but provides better solutions than the heuristic of \cite{mbiadou2018iterative}. 

A nature-inspired algorithm called Hybrid Ant Colony Optimization (HACO) is proposed in \citet*{Dinh2021}. The algorithm reuses the idea of \cite{mbiadou2018iterative} that represents a PDSTSP solution as a permutation of all customers. However, the authors propose a new dynamic programming with a controllable amount of states while exploring wider solution space to split the giant tour into a complete PDSTSP solution. Several problem-tailored components such as local search and solution construction also play important roles in improving the performance of the algorithm. When being tested on benchmark instances from the literature, the HACO algorithm outperforms the two existing metaheuristics in terms of both running time and solution quality. The most recent metaheuristic for the PDSTSP is a Ruin-and-Recreate algorithm called Slack Induction by String and Sweep Removals (SISSR) introduced in \cite{Nguyen2021}. Exploiting PDSTSP solution characteristics in an effective manner, the metaheuristic manages to introduce ``sufficient" rooms to a solution via new removal operators during the ruin phase. It is expected to enhance the possibilities for improving solutions later in the recreate phase. Also in \cite{Nguyen2021}, the authors introduce a new variant of the PDSTSP called PDSVRP in which they allow multiple trucks and consider the objective of minimizing the total transportation costs. 

\citet*{Raj2021} present a variant of the PDSVRP problem in which they consider the objective function that minimizes the completion time all the vehicles come back to the depot. The authors propose three MILP formulations and develop a branch-and-price (BAP) algorithm and a BAP-based heuristic to solve the problem. The authors report that their exact methods can solve to optimality all the small PDSTSP instances of \cite{murray2015flying} and 3 instances with 52 customers of \cite{mbiadou2018iterative}. Most recently, \citet*{MBIADOUSALEU2021} generalize the PDSTSP problem in the same way as the PDSVRP. The problem is called the Parallel Drone Scheduling Multiple Traveling Salesman Problem. A hybrid Iterated Local Search is proposed. A MILP formulation and a simple branch-and-cut approach are also designed to solve small instances. 


As mentioned above, most approaches proposed to solve the PDSTSP belong to the class of heuristic methods. To the best of our knowledge, the results of exact algorithms for the problem are quite limited. Only optimal results obtained on small instances with no more than 20 customers and three instances with 52 customers are provided in \cite{murray2015flying, dell2020matheuristic, Raj2021}. Optimal solutions for 87 over 90 benchmark instances with up to 229 customers created by \cite{mbiadou2018iterative} have never been reported.
The main motivation of this paper is to fill this methodology gap by proposing a simple, yet efficient exact branch-and-cut (B\&C) algorithm for the PDSTSP. The experimental results on the currently largest problem instances of \cite{mbiadou2018iterative} show that the developed solution approach can solve all but two instances in a reasonable running time. To provide more challenging instances to test solution methods in the future, we introduce a new set of instances with up to 600 customers. We report the results of the state-of-the-art algorithms obtained on these instances to further investigate their performance.

The remaining of the paper is structured as follows: Section \ref{sec-formulation} describe our MILP formulation along with numerous valid inequalities, upon which the branch-and-cut algorithm is proposed in Section \ref{sec-method}. Section \ref{sec-experiment} reports experimental results and Section \ref{sec-conclusion} concludes our research.

\section{Mixed integer linear formulation and valid inequalities}
\label{sec-formulation}
The PDSTSP has been formulated as MILP models in \cite{murray2015flying, mbiadou2018iterative, dell2020matheuristic, Raj2021}. A common feature of these models is that they use variables defined on directed arcs to represent the flow of truck tour. In this study, our model is based on the truck-flow variables defined on undirected edges to reduce a half of flow variables. As such, our formulation use four types of variables described as follows:
\begin{itemize}
    \item $y_i$: binary variable equal to 1 if customer $i\in V_c$ is served by the truck, and 0 otherwise;
    \item $x_{ij}$: binary variable equal to $1$ if edge $(i, j)$ is traveled by the truck ($i, j \in V: i < j$).
    \item $z_{ik}$: binary variable equal to $1$ if customer $i \in V_d$ is served by drone $k \in D$.
    \item $t_{max}$: positive real variable representing the completion time at which all vehicles return to the depot.
\end{itemize}

The MILP model for the PDSTSP is formulated as:
{\small
\begin{align}
    \textrm{minimize} \qquad \qquad & t_{max} & \label{obj}\\[-0.2em]
    \textrm{subject to} \qquad
    t_{max} &\geq \sum_{i < j} t_{ij}x_{ij}  \label{c1}\\[-0.5em]
    t_{max} &\geq \sum_{i \in V_d}t'_{i}z_{ik} \quad \forall k \in D \label{c2}\\[-0.6em]
    y_i + \sum_{k \in M} z_{ik} &= 1 \quad  \forall i \in V_d \label{c3}\\[-1em]
    y_i &= 1 \quad  \forall i \in V_t \cup \{0, n+1\}\label{c4}\\[-0.6em]
    \sum_{i > 0} x_{0i}  &= 1  \label{c5}\\[-0.6em]
    \sum_{i < n+1} x_{i(n+1)} &= 1 \label{c6}\\[-0.6em]
    \sum_{i< j} x_{ij} + \sum_{j<k} x_{jk} &= 2y_j \quad \forall j \in V_c \label{c7}\\[-0.6em]
   \sum_{ \substack{i \in V, j \in V \setminus S \\ \text{or } j\in S, i \in V\setminus S}} x_{ij} & \geq 2y_k \quad \forall k \in S, S \subseteq V_c, |S| \geq 3\label{c8}\\[-0.8em]
    x_{ij} &\in \{0, 1\} \quad \forall i, j \in V: i < j \label{c9}\\[-0.4em]
    y_i &\in \{0, 1\} \quad \forall i \in V_c \label{c10}\\[-0.4em]
    z_{ik} &\in \{0, 1\} \quad \forall i \in V_d, k \in D \label{c11}\\[-0.4em]
    t_{max} &\geq 0 \label{c12}
\end{align}
}%
\\[-2em]
The objective function (\ref{obj}) minimizes the time all vehicles coming back to the depot. Constraints (\ref{c1}) and (\ref{c2}) define variable $t_{max}$. More precisely, they provide lower bounds for the objective value in terms of each involved vehicle’s completion time. Constraints (\ref{c3}) require that each drone-eligible customer must be served by either the truck or a drone, while constraints (\ref{c4}) ensure the truck must visit all the customers which cannot be served by the drones. Constraints (\ref{c5}) and (\ref{c6}) make sure that the truck must travel on only one edge linking to the depot and its copy, respectively. Constraints (\ref{c7}) are degree constraints for the truck tour. They imply that if vertex $j$ is visited by the truck, there must be two edges of the truck tour linking to $j$. The connectivity of the truck tour is guaranteed by constraints (\ref{c8}), which are also called the Subtour Elimination Constraints (SECs). They force the presence of at least two edges between any set $S$ and $V \setminus S$, for every subset $S$ of $V_c$ such that $S$ contains at least one vertex visited by the truck. Finally, the remaining constraints define the domain of variables. Note that in this formulation, we allow the case where $x_{0(n+1)}=0$ since this can happen if the drones serve all the customers and the truck does not serve any of them.

The formulation above can be enriched by the presence of valid inequalities. The following constraints follow from the definition of the $x_{ij}$ and $y_k$ variables:\\[-1.8em]
{\small
\begin{align}
    x_{ij} &\leq y_i \quad  \forall i,j \in V: i < j \label{c13}\\
    x_{ij} &\leq y_j \quad  \forall i,j \in V: i < j \label{c14}
\end{align}
}%

The symmetry between the drones can be broken by adding the following constraints: \\[-1.8em]
{\small
\begin{align}
    \sum_{i \in V_d}d_{i}z_{ik-1} &\geq \sum_{i \in V_d}d_{i}z_{ik} \quad  \forall k \in M \setminus \{1\} \label{c15}
\end{align}
}%

Then, the constraints (\ref{c2}) can be replaced by:\\[-1.8em]
{\small
\begin{align}
     t_{max} &\geq \sum_{i \in V_d}d_{i}z_{i1} \label{c16}
\end{align}
}%
\\[-1.2em]
It is possible to derive for the formulation several different forms of the Subtour Elimination Constraints (SECs). Constraints (\ref{c8}) are equivalent to the following constraints:


\begin{proposition}
    The constraints \\[-1.8em]
    {\small
    \begin{align}
        &\sum_{i, j \in S:i<j} x_{ij} \leq  \sum_{k \in S \setminus l} y_k & \forall l \in S, S \subseteq V_c, |S| \geq 3 \label{c17}
    \end{align}
    }%
    are valid SECs for the PDSTSP.
\end{proposition}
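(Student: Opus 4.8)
The plan is to prove the equivalence between the cut-set form (\ref{c8}) and the subtour form (\ref{c17}) of the SECs, using the degree constraints (\ref{c7}) as the bridge; the validity of (\ref{c17}) then follows immediately from the (assumed) validity of (\ref{c8}). First I would fix a subset $S \subseteq V_c$ with $|S| \geq 3$ and sum the degree equalities (\ref{c7}) over all $j \in S$. Writing $E(S) = \{(i,j): i,j \in S,\ i<j\}$ for the edges internal to $S$ and $\delta(S)$ for the set of edges with exactly one endpoint in $S$, each internal edge is counted twice in this sum (once per endpoint) while each cut edge is counted once. This yields the counting identity $2\sum_{i<j,\,i,j\in S} x_{ij} + \sum_{(i,j)\in \delta(S)} x_{ij} = 2\sum_{j\in S} y_j$.

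The second step is a rearrangement and substitution. Solving the identity for the cut term gives $\sum_{(i,j)\in\delta(S)} x_{ij} = 2\sum_{j\in S} y_j - 2\sum_{i<j,\,i,j\in S} x_{ij}$. Substituting this into (\ref{c8}), which for a chosen $k \in S$ reads $\sum_{(i,j)\in\delta(S)} x_{ij} \geq 2y_k$, and dividing by $2$, I obtain $\sum_{j\in S} y_j - \sum_{i<j,\,i,j\in S} x_{ij} \geq y_k$, that is $\sum_{i<j,\,i,j\in S} x_{ij} \leq \sum_{j\in S} y_j - y_k = \sum_{j\in S\setminus\{k\}} y_j$. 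Identifying the index $l$ with $k$, this is precisely (\ref{c17}). Since every step in this chain is reversible, the converse substitution recovers (\ref{c8}) from (\ref{c17}), establishing equivalence of the two families under the degree constraints (\ref{c7}).

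I would then close by noting that the argument is uniform in whether or not each vertex of $S$ is visited by the truck: the degree constraint (\ref{c7}) holds with right-hand side $2y_j$ for every $j\in V_c$, so the counting identity is unaffected by vertices with $y_j=0$, and the depot copies $0, n+1 \notin S$ contribute only to $\delta(S)$. The main obstacle I anticipate is purely the bookkeeping of the edge decomposition: one must verify carefully that summing the vertex degrees over $S$ double-counts exactly the internal edges and single-counts exactly the cut edges, and that the quantifier over $k\in S$ in (\ref{c8}) matches the quantifier over $l\in S$ in (\ref{c17}) under the correspondence $l=k$. Beyond this, the derivation is a direct algebraic manipulation with no further subtleties.
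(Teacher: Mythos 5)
Your proof is correct, but it takes a genuinely different route from the paper's. The paper proves validity of (\ref{c17}) directly against feasible solutions: letting $S^*$ be the set of vertices visited by the truck, it observes that the truck edges inside $S$ form a union of paths, so the left-hand side is at most $|S\cap S^*|-1$, while the right-hand side equals $|S\cap S^*|-y_l \geq |S\cap S^*|-1$. You instead derive (\ref{c17}) algebraically: summing the degree equations (\ref{c7}) over $j\in S$ gives the identity $2\sum_{i,j\in S:i<j}x_{ij}+\sum_{(i,j)\in\delta(S)}x_{ij}=2\sum_{j\in S}y_j$, and substituting this into the cut-set form (\ref{c8}) yields (\ref{c17}) with $l=k$; reversibility of the substitution gives equivalence of the two families modulo (\ref{c7}). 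Your argument buys more than the proposition asks: it establishes the equivalence that the paper asserts without proof in the sentence preceding the proposition (``Constraints (\ref{c8}) are equivalent to the following constraints''), which is exactly what licenses the branch-and-cut implementation's choice to generate only one member of each equivalent pair. The trade-off is that your derivation is conditional on the validity of (\ref{c8}) and (\ref{c7}), i.e., on the correctness of the base formulation, whereas the paper's argument is self-contained and verifies (\ref{c17}) from scratch against the combinatorial structure of a feasible solution. Both are sound; your bookkeeping of the edge decomposition (internal edges counted twice, cut edges once, depot copies only in $\delta(S)$ since $S\subseteq V_c$) is handled correctly.
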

\begin{proof}
    Let $S^*$ be the set of vertices visited by the truck in a feasible PDSTSP solution. The left-hand side of (\ref{c17}) is less than or equal to $|S \cap S^*| -1$, while the right-hand side is equal to $|S \cap S*| - y_l$. Because $y_l \leq 1 \; \forall l \in S$, constraints (\ref{c17}) are always valid.
\end{proof}

Summing up constraints (\ref{c17}) for every vertex $l \in S$, we have:

\begin{proposition}
    The constraints \\[-1.8em]
    {\small
    \begin{align}
        &|S| \sum_{i, j \in S:i<j} x_{ij} \leq  (|S|-1) \sum_{k \in S} y_k &\forall S \subseteq V_c, |S| \geq 3 \label{c18}
    \end{align}
    }%
    are valid SECs for the PDSTSP.\\[-1.8em]
\end{proposition}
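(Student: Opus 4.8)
The plan is to derive inequality (\ref{c18}) directly from the family (\ref{c17}), whose validity is already established in the preceding proposition, by aggregating over the index $l$. Since (\ref{c17}) holds for every choice of $l \in S$, and since a sum of valid inequalities is itself valid, it suffices to show that summing the $|S|$ inequalities (\ref{c17}) --- one for each $l \in S$ --- reproduces exactly (\ref{c18}). This reduces the entire argument to a short bookkeeping computation on the two sides, with no appeal to the combinatorial structure of the solution beyond what (\ref{c17}) already encodes.

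First I would handle the left-hand side. The quantity $\sum_{i,j \in S: i<j} x_{ij}$ does not depend on $l$, so summing it over the $|S|$ choices of $l$ simply multiplies it by $|S|$, yielding the term $|S| \sum_{i,j \in S: i<j} x_{ij}$ appearing on the left of (\ref{c18}).

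Next I would treat the right-hand side, which is the only place requiring any care. Writing $\sum_{k \in S \setminus l} y_k = \sum_{k \in S} y_k - y_l$ and summing over $l \in S$ gives $\sum_{l \in S}\bigl(\sum_{k \in S} y_k - y_l\bigr) = |S|\sum_{k \in S} y_k - \sum_{l \in S} y_l = (|S|-1)\sum_{k \in S} y_k$, since the $|S|$ subtracted terms reassemble into a single copy of $\sum_{k \in S} y_k$, lowering the coefficient from $|S|$ to $|S|-1$. Combining the two sides reproduces (\ref{c18}) verbatim, and validity is inherited from that of (\ref{c17}).

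The main (and essentially only) obstacle is this right-hand-side arithmetic: one must keep track of the $|S|$ separate $-y_l$ contributions and verify that they collapse into exactly one copy of $\sum_{k \in S} y_k$. Everything else follows immediately from the validity of (\ref{c17}) together with the fact that nonnegative combinations of valid inequalities remain valid.
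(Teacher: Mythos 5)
Your proposal is correct and matches the paper's own derivation exactly: the paper obtains (\ref{c18}) precisely by summing constraints (\ref{c17}) over all $l \in S$, with the same collapse of the $|S|$ terms $-y_l$ into one copy of $\sum_{k\in S} y_k$ on the right-hand side. Nothing further is needed.
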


In a similar manner, the following constraints can be derived directly from constraints (\ref{c8}):

\begin{proposition}
    The constraints \\[-1.8em]
    {\small
    \begin{align}
        &|S| \sum_{\substack{i \in S, j \in V \setminus S \\ \text{or } j\in S, i \in V\setminus S}} x_{ij} \geq 2 \sum_{k \in S} y_k &\forall S \subseteq V_c, |S| \geq 3 \label{c19}
    \end{align}
    }%
    are valid SECs for the PDSTSP. \\[-1.8em]
\end{proposition}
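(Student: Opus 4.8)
The plan is to derive (\ref{c19}) purely as an aggregation of the subtour elimination constraints (\ref{c8}), which are already assumed to hold for every feasible PDSTSP solution; any nonnegative combination of such valid inequalities is again valid, so no independent feasibility argument is needed.

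First I would fix an arbitrary $S \subseteq V_c$ with $|S| \geq 3$ and record the key structural observation: for this fixed $S$, constraint (\ref{c8}) furnishes one inequality per vertex $k \in S$, and the left-hand side --- the total truck-flow on the edges crossing the cut $(S, V \setminus S)$ --- is the \emph{same} expression in all of them, since it does not involve the index $k$. Only the right-hand side $2y_k$ changes with $k$.

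Next I would sum these $|S|$ inequalities over all $k \in S$. On the left, the common cut-sum is added $|S|$ times, yielding $|S|$ times the crossing-edge expression; on the right, $\sum_{k \in S} 2 y_k = 2 \sum_{k \in S} y_k$. This is precisely (\ref{c19}). The argument mirrors the derivation of (\ref{c18}) from (\ref{c17}), the only difference being that here the aggregated left-hand side is constant in the summation index, so the summation simply multiplies it by the cardinality $|S|$. There is no genuine obstacle; the single point worth verifying is exactly this independence of the cut-sum from $k$, which guarantees the clean factor of $|S|$ rather than a collection of distinct terms on the left.
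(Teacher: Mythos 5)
Your proposal is correct and matches the paper's intended derivation exactly: the paper presents this proposition as one that ``can be derived directly from constraints (\ref{c8})'' in the same manner that (\ref{c18}) is obtained by summing (\ref{c17}), i.e., by summing the cut inequalities (\ref{c8}) over all $k \in S$, which multiplies the $k$-independent cut sum by $|S|$ on the left and produces $2\sum_{k \in S} y_k$ on the right. Your observation that the left-hand side of (\ref{c8}) does not depend on the summation index $k$ is precisely the point that makes the aggregation yield the clean factor $|S|$.
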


Connectivity constraints (\ref{c8}) and (\ref{c17}) can be strengthened if there exists at least one customer that must be serviced by the truck.  

\begin{proposition}
    The constraints \\[-1.8em]
    {\small
    \begin{align}
        & \sum_{\substack{i \in S, j \in V \setminus S \\ \text{or } j\in S, i \in V\setminus S}} x_{ij} \geq 2 &\forall S \subseteq V_c, S \cap V_t \neq \varnothing, |S| \geq 3 \label{c20}
    \end{align}
    }%
    are valid SECs for the PDSTSP.
\end{proposition}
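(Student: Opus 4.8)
The plan is to obtain (\ref{c20}) as an immediate specialization of the base connectivity constraints (\ref{c8}), exploiting the fact that the hypothesis $S \cap V_t \neq \varnothing$ forces a vertex of $S$ into the truck tour. Concretely, the presence of a truck-mandatory customer in $S$ lets me replace the variable right-hand side $2y_k$ appearing in (\ref{c8}) by the constant $2$.

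First I would select any vertex $l \in S \cap V_t$, which exists precisely because the intersection is nonempty. Since $l \in V_t$, constraint (\ref{c4}) fixes $y_l = 1$ in every feasible solution. I would then instantiate (\ref{c8}) at the choice $k = l$; this is admissible because $l \in S$ and $|S| \geq 3$. The resulting inequality bounds the cut sum $\sum x_{ij}$ over the edges crossing $(S, V \setminus S)$ below by $2y_l$, and substituting $y_l = 1$ yields exactly the bound $2$ required by (\ref{c20}).

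I do not expect any genuine obstacle here, as the statement is a direct strengthening of (\ref{c8}); the only thing to verify is that the hypothesis really does furnish a vertex with $y_l = 1$, which is immediate from (\ref{c4}). For a self-contained justification that does not invoke (\ref{c8}), I would instead argue combinatorially: the truck route is a single connected cycle through the depot, the depot vertices $0$ and $n+1$ lie in $V \setminus S$ (since $S \subseteq V_c$), and the forced visit to $l \in S$ means the route must both enter and leave $S$, hence cross the cut an even number of times and therefore at least twice. Either argument delivers the validity of (\ref{c20}).
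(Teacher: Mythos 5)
Your proposal is correct, and it in fact contains two valid arguments. Your fallback combinatorial argument is precisely the paper's own proof: the paper disposes of (\ref{c20}) in one line by noting that the truck tour must contain at least one vertex in $S$ (the truck-mandatory customer) and one vertex in $V\setminus S$ (the depot), so the cut is crossed at least twice. Your primary argument --- instantiate (\ref{c8}) at $k=l$ for some $l\in S\cap V_t$ and use (\ref{c4}) to substitute $y_l=1$ --- is a slightly different route for this proposition, though it is exactly the technique the paper itself uses one proposition later to justify (\ref{c21}) from (\ref{c17}). Both derivations are sound: the instantiation of (\ref{c8}) is admissible since $l\in S\subseteq V_c$ and $|S|\geq 3$, and $y_l=1$ is forced by (\ref{c4}). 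The formal derivation buys you the observation that (\ref{c20}) is a logical consequence of constraints already in the model (its value is as a sparser cut with a constant right-hand side, useful in separation), while the combinatorial argument is self-contained and establishes validity without presupposing the correctness of (\ref{c8}). Either one suffices; no gap.
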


\begin{proof}
    Immediate since in this case the truck tour in the solution must contain at least one vertex in $S$ and one vertex in $V\setminus S$.
\end{proof}

The equivalent form of constraints (\ref{c20}) can be written as follows:

\begin{proposition}
    The constraints \\[-1.8em]
    {\small
    \begin{align}
        &\sum_{i, j \in S:i<j} x_{ij} \leq \sum_{k \in S} y_k - 1 &\forall S \subseteq V_c, S \cap V_t \neq \varnothing, |S| \geq 3 \label{c21} 
    \end{align}
    }%
    are valid SECs for the PDSTSP.
\end{proposition}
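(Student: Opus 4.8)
The plan is to recognize that (\ref{c21}) is nothing more than the algebraic restatement of (\ref{c20}) obtained by trading the boundary (cut) edges for the interior edges by means of the degree constraints (\ref{c7}). Since (\ref{c20}) has just been shown valid in the preceding proposition, validity of (\ref{c21}) will follow at once as soon as this equivalence is established. So the whole task reduces to a bookkeeping identity plus a one-line substitution.

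First I would fix a subset $S \subseteq V_c$ with $S \cap V_t \neq \varnothing$ and $|S| \geq 3$, and sum the degree constraints (\ref{c7}) over all $j \in S$. The key is a handshaking identity: in that sum, every edge with both endpoints in $S$ is counted twice, while every edge with exactly one endpoint in $S$ is counted once. Writing $\delta(S)$ for the set of edges crossing between $S$ and $V \setminus S$, this gives
\[
2\sum_{i, j \in S:\, i<j} x_{ij} + \sum_{(i,j)\in \delta(S)} x_{ij} = 2\sum_{k \in S} y_k .
\]
Solving for the cut term, $\sum_{(i,j)\in\delta(S)} x_{ij} = 2\sum_{k\in S} y_k - 2\sum_{i,j \in S:\, i<j} x_{ij}$, and substituting into (\ref{c20}) turns $\sum_{(i,j)\in\delta(S)} x_{ij} \geq 2$ into $2\sum_{k\in S} y_k - 2\sum_{i,j \in S:\, i<j} x_{ij} \geq 2$; dividing by $2$ and rearranging yields exactly (\ref{c21}). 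Because every step is reversible, the two families coincide whenever (\ref{c7}) holds, so (\ref{c21}) inherits the validity of (\ref{c20}). Note that the hypothesis $S \cap V_t \neq \varnothing$ plays no role in the equivalence itself; it is the condition that makes (\ref{c20}), and hence (\ref{c21}), valid in the first place, and it is simply carried along.

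The only point requiring care — and the main obstacle — is making sure the handshaking identity counts edges correctly when the cut $\delta(S)$ touches the depot. Constraints (\ref{c7}) are stated for $j \in V_c$, whereas $\delta(S)$ can contain edges incident to the depot vertices $0$ and $n+1$. Since $S \subseteq V_c$, restricting the summation to $j \in S$ is legitimate, and any such depot-incident edge is correctly recorded as a crossing edge (counted once) rather than an interior edge. I would therefore phrase the counting step explicitly to confirm that no edge is double- or under-counted, after which the equivalence, and thus the desired validity, is immediate.
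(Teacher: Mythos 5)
Your proof is correct, but it takes a genuinely different route from the paper's. The paper proves (\ref{c21}) in one line from (\ref{c17}): since $S \cap V_t \neq \varnothing$, constraints (\ref{c4}) force some $l \in S$ with $y_l = 1$, and choosing that $l$ in (\ref{c17}) makes its right-hand side equal to $\sum_{k \in S} y_k - y_l = \sum_{k \in S} y_k - 1$, which is exactly (\ref{c21}). You instead derive (\ref{c21}) from (\ref{c20}) by summing the degree constraints (\ref{c7}) over $j \in S$ and substituting the resulting handshaking identity $2\sum_{i,j \in S: i<j} x_{ij} + \sum_{(i,j) \in \delta(S)} x_{ij} = 2\sum_{k \in S} y_k$ into the cut inequality; this is sound (it is the same identity the paper uses as (\ref{proof3}) in the proof of the 2-matching constraints), and your care about depot-incident edges is warranted and handled correctly, since the sum runs only over $j \in S \subseteq V_c$ and each crossing edge is counted exactly once. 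What your route buys is a rigorous proof of the stronger claim, asserted but not proved in Section \ref{sec-method}, that (\ref{c20}) and (\ref{c21}) are pairwise \emph{equivalent} under the degree equations --- a fact the branch-and-cut exploits when it generates only one constraint of each equivalent pair. What the paper's route buys is brevity: no counting argument is needed, only the observation that one term on the right-hand side of (\ref{c17}) can be pinned to $1$, so validity of (\ref{c21}) follows from Proposition~1 rather than from Propositions~4 and the degree constraints together.
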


\begin{proof}
    Immediate from constraints (\ref{c17}) since in this case there must be a vertex $l \in S$ such that $y_l = 1$.
\end{proof}

Several other valid inequalities for the symmetric TSP can be applied directly to the PDSTSP (see \cite{Padberg}) or can be strengthened. In this study, we use valid inequalities based on the generation of the 2-matching constraints. 


\begin{proposition}
    The constraints \\[-1.8em]
    {\small
    \begin{align}
        &\sum_{i, j \in H:i<j} x_{ij} + \sum_{(i, j) \in E'} x_{ij} \leq \sum_{j \in H} y_j + \frac{1}{2}(|H'| - 1) \label{2mc}
    \end{align}
    }%
    for all $H \subseteq V_c$, $H' \subseteq H$ and $E' \subset E$ satisfying: (1) $|H'| \geq 3$ and odd; (2) only one endpoint of each edge in $E'$ is in $H'$ and each vertex in $H'$ must be an endpoint of a single edge in $E'$; (3) there is no edge in $E'$ such that whose both endpoints are in $H$; and (4) two different edges in $E'$ do not share a same endpoint.
    
    are valid inequalities for the PDSTSP.
\end{proposition}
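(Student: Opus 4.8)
The plan is to follow the classical derivation of the 2-matching (comb) inequalities for the symmetric TSP and adapt it to the presence of the truck-assignment variables $y_j$. Since (\ref{2mc}) is a linear inequality, it suffices to verify it at every integer feasible point, where the $x$-variables form the incidence vector of a single truck tour and $y_v = 1$ exactly when $v$ lies on that tour. Throughout I write $x(E(H)) = \sum_{i,j\in H:\,i<j} x_{ij}$ for the edges inside the handle and $x(\delta(H))$ for the sum of $x_{ij}$ over edges having exactly one endpoint in $H$.

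First I would record the handshake identity obtained by summing the degree constraints (\ref{c7}) over $v \in H \subseteq V_c$: since each interior edge of $H$ is then counted twice and each boundary edge once,
\begin{equation*}
2\,x(E(H)) + x(\delta(H)) = 2\sum_{v \in H} y_v,
\qquad\text{hence}\qquad
x(E(H)) = \sum_{v \in H} y_v - \tfrac12\, x(\delta(H)).
\end{equation*}
Next I would exploit the structure of the teeth. By conditions (2) and (3) every edge of $E'$ has exactly one endpoint in $H'\subseteq H$ and its other endpoint outside $H$, so $E' \subseteq \delta(H)$; by conditions (2) and (4) the assignment of each vertex of $H'$ to its unique incident tooth is a bijection, whence $|E'| = |H'|$. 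Substituting the identity into the left-hand side of (\ref{2mc}) and cancelling the common term $\sum_{v\in H} y_v$, the claim reduces to proving
\begin{equation*}
x(E') - x\bigl(\delta(H)\setminus E'\bigr) \leq |H'| - 1 .
\end{equation*}

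The hard part — and the only place the oddness hypothesis is used — will be the parity argument supplying the final $-1$. The crude bound $x(E') \le |E'| = |H'|$ together with $x\bigl(\delta(H)\setminus E'\bigr)\ge 0$ gives only $x(E') - x\bigl(\delta(H)\setminus E'\bigr) \le |H'|$, which is one unit too weak. To close the gap I would observe that a single tour crosses the boundary of any vertex set an even number of times (the truck route runs from the depot $0$ to its copy $n+1$, both outside $H$), so $x(\delta(H))$ is even; consequently $x(E') - x\bigl(\delta(H)\setminus E'\bigr) = x(\delta(H)) - 2\,x\bigl(\delta(H)\setminus E'\bigr)$ is even as well. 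An even integer that is at most the odd number $|H'|$ must in fact be at most $|H'|-1$, which is exactly the required bound. Combining this with the handshake identity yields (\ref{2mc}); the odd-cardinality requirement on $H'$ is precisely what makes the parity gap effective.
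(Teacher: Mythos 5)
Your proof is correct, and it diverges from the paper's own proof at the decisive step. Both arguments open identically, summing the degree constraints (\ref{c7}) over $H$ to get the handshake identity $2\sum_{i,j\in H:i<j}x_{ij}+\sum_{(i,j)\in\delta(H)}x_{ij}=2\sum_{j\in H}y_j$. From there the paper stays algebraic: it adds $\sum_{(i,j)\in E'}x_{ij}\le|H'|$ to this equation (valid since the edges of $E'$ are pairwise disjoint and each $x_{ij}\le 1$), discards the nonnegative boundary terms, divides by two, and obtains the final $-\tfrac12$ by rounding down the right-hand side --- a Chv\'atal--Gomory step that invokes integrality only once, at the very end. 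You instead restrict to integer feasible points from the outset and close with the classical parity argument: the truck edges form a path between $0$ and $n+1$, both of which lie outside $H\subseteq V_c$, so the path crosses $\delta(H)$ an even number of times; hence $x(E')-x(\delta(H)\setminus E')=x(\delta(H))-2x(\delta(H)\setminus E')$ is even, and an even integer bounded above by the odd number $|H'|$ is at most $|H'|-1$. The trade-off is real: the paper's derivation remains valid for fractional points until the last step, so it exhibits (\ref{2mc}) as a rank-one Chv\'atal--Gomory cut of the LP relaxation and needs no structural knowledge of integer solutions; your argument does need the structural fact that integer solutions are $0$--$(n+1)$ paths (to justify even cut-crossing), but in exchange it isolates combinatorially exactly why the oddness of $|H'|$ yields the unit improvement --- this is the Gr\"otschel--Padberg style proof of the $2$-matching inequalities, cleanly transplanted to the PDSTSP with the $y$-variables playing the role of the tour's vertex set. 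Both proofs are complete and rigorous.
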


\begin{proof}
    Summing up constraints (\ref{c7}) for all $j \in H$, we obtain:
    {\small
    \begin{align}
        &2 \sum_{i, j \in H:i<j} x_{ij} + \sum_{\substack{i \in H, j \in V \setminus H \\ \text{or } j\in H, i \in V\setminus H}} x_{ij} = 2 \sum_{j \in H} y_j \label{proof3}
    \end{align}
    }%

    We add $\sum_{(i, j) \in E'} x_{ij}$ to the left-hand side of (\ref{proof3}) and $|H'|$ to the right-hand side.
    Since $x_{ij} \leq 1$ due to (\ref{c9}) and two different edges in $E'$ do not share a same endpoint, equation (\ref{proof3}) becomes the following inequality: \\[-1.8em]
    {\small
    \begin{align}
        &2 \sum_{i, j \in H:i<j} x_{ij} + 2 \sum_{(i, j) \in E'} x_{ij} + \sum_{\substack{i \in H \setminus H', j \in V \setminus H\\ \text{or } j \in H \setminus H', i \in V \setminus H }} x_{ij} + \sum_{\substack{i \in H', j \in V \setminus H: (i, j) \notin E'\\ \text{or } j \in H', i \in V \setminus H: (j, i) \notin E'}} x_{ij} \leq 2 \sum_{j \in H} y_j + |H'|\label{proof4} 
    \end{align}
    }%

    Removing the third and fourth terms of the left-hand side from (\ref{proof4}) and dividing by $2$, we obtain: \\[-1.7em]
    {\small
    \begin{align}
        & \sum_{i, j \in H:i<j} x_{ij} + \sum_{(i, j) \in E'} x_{ij}  \leq \sum_{j \in H} y_j + \frac{1}{2}|H'|\label{proof5}
    \end{align}
    }%

    As the left-hand side of (\ref{proof5}) is integer and $|H'| \geq 3$ and $|H'|$ is odd, we can round down the right-hand side, which yields (\ref{2mc}).
\end{proof}

\section{Branch-and-cut algorithm}
\label{sec-method}
The PDSTSP is solved to optimality using a standard branch-and-cut algorithm. We solve a linear programming (LP) including the objective function (\ref{obj}) and the constraints (\ref{c1}), (\ref{c13}), (\ref{c14}), (\ref{c15}), (\ref{c16}), and (\ref{c3})-(\ref{c7}). We then search for violated constraints of type (\ref{c8}) and (\ref{c17})-(\ref{2mc}), and the detected constraints are added to the current LP, which is then reoptimized. This process is repeated until all the constraints are satisfied. If there are fractional variables, we branch. If all the variables are integer, we explore another node.

The separation of the connectivity constraints (\ref{c8}) and (\ref{c17})-(\ref{c21}) is performed as follows. From a LP solution $(x^*, y^*)$ obtained on a node of the search tree, we build a graph $G^* = (V^*, E^*)$ where set $V^*$ includes vertex $i$ if the value of variable $y^*_i$ is positive and set $E^*$ contains edge ($i, j$) with the cost equal to $x^*_{ij}$. To find set $S \subset V_c$ that violates the connectivity constraints, we first use the Breadth First Search (BFS) to find all the strongly connected components of the graph induced by the current LP solution. If the number of components is at least two, each component is recorded as a set $S$ that we need to find. On the other hand, if only one component exists and the LP solution is still fractional, we solve the minimum cut problem using the procedure described in \cite{Padberg}. The obtained cut separates the vertices of the induced graph into two subsets, we record one which does not contain the depot and violates one of SECs.

For each recorded set $S$, if it includes any customers in set $V_t$, we generate SECs (\ref{c20}) and (\ref{c21}) from $S$. Otherwise, SECs (\ref{c8}), (\ref{c17}), (\ref{c18}), (\ref{c19}) are generated. To limit the number of generated constraints of type (\ref{c8}) and (\ref{c17}), we only apply them on integer solutions. Finally, we notice that constraints (\ref{c8}) and (\ref{c17}), (\ref{c18}) and (\ref{c19}), (\ref{c20}) and (\ref{c21}) are pairwise equivalent. Therefore, it is sufficient to generate only one of them whenever needed. We select the constraint with the smaller number of nonnegative coefficients on the left-hand side in the hope of creating the lighter model that is easier to solve. This idea is also used to propose the Integer Linear Program based method tackling the TSP in \cite{Oswin}. 



When the BFS mentioned above cannot find any connected components and the solution is still fractional, a search for violated 2-matching constraints (\ref{2mc}) is started. One can use the procedure proposed in \cite{Padberg} running in $\mathcal{O}(n^4)$ to separate these constraints, which is quite expensive. In this study, we propose a faster procedure described as follows.


\begin{enumerate} 
    \item Let $G^* = (V^*,E^*)$ be the weighted graph induced by the fractional solution that needs to be separated. First, four empty sets $\overline{E}$, $E^\prime$, $H$, and $H^\prime$ are created. Every vertex $i \in V^*$ is set unmarked.    \label{s1}\\[-1.8em]
    \item We iterate over every edge $(i, j) \in E^*$ in the decreasing order of value $x^*_{ij}$. If both vertices $i$ and $j$ are not the endpoints of an edge in $\overline{E}$, we put edge $(i, j)$ into $\overline{E}$. The aim of this step is to greedily find set $E'$ to possibly maximize the left-hand side of (\ref{2mc}). \label{s2}\\[-1.8em]
    \item We search for a unmarked vertex $i \in V^*$. If we cannot find any unmarked vertex, the process is terminated. Otherwise, we mark $i$ and go to Step \ref{s4}. \label{s3}\\[-1.8em]
    \item Vertex $i$ is put into set $H$. A BFS is triggered to find every unmarked vertex connecting to $i$ through a path which does not contain any edges of $\overline{E}$. All the vertices on the path are marked and put into set $H$. \label{s4}\\[-1.8em]
    \item Set $H'$ is defined as $H' = H \cap V(\overline{E}) $, where $V(\overline{E})$ is the set of vertices which are the endpoints of all edges in $\overline{E}$. Then, set $E'$ is constructed such that $E' \subset \overline{E}$, $|E'| = |H'|$, and only one endpoint of every edge in $E'$ is a vertex in $H'$. If the size of $H'$ is odd and larger than 2, we check the violation of the constraint (\ref{2mc}) associated with sets $H$, $H'$, and $E'$. We then clear sets $H$, $H'$ and $E'$, and go to Step \ref{s3}. \label{s5}
\end{enumerate}

It is trivial that the complexity of steps \ref{s1}, \ref{s3}, and \ref{s5} is $\mathcal{O}(n^2)$. In step \ref{s2}, ordering the edges in $E^*$ takes $\mathcal{O}(n^2\log(n^2))$ and we must iterate over all edges, the overall complexity is thus $\mathcal{O}(n^2\log(n))$. The BFS procedure in step \ref{s4} runs in $\mathcal{O}(n^2)$ because a vertex in $V^*$ is never considered more than once. Hence, the complexity of the separation procedure is $O(n^2\log{n})$, which is lower than $O(n^4)$ of the procedure proposed in \cite{Padberg}.

\section{Computational experiments}
\label{sec-experiment}
We test the B\&C algorithm on the existing benchmark instances introduced in \cite{murray2015flying} and \cite{mbiadou2018iterative} with up to 229 customers.
Two harder instance sets are also created to further investigate the performance of our B\&C algorithm as well as the state-of-the-art metaheuristics. For the first set, we follow the same procedure of \cite{mbiadou2018iterative}, but based on larger TSP instances from the TSPLIB library with up to 783 customers: lin318, pr439, rat575, and rat783. From each selected TSP instance, we derive 15 PDSTSP instances by modifying the following parameters: the position of the depot ($dp$), the percentage of drone-eligible customers ($el$), the drone speed ($sp$), and the number of drones (\#). The second set is generated by the instance generator proposed in \cite{Nguyen2021}. Each instance is characterized by: the number of customers ($n$), which is either 400 or 600; the distribution of customer positions ($cd$), which is either random (r), clustered (c), random clustered (rc); the position of the depot ($dp$), which is either in the center of the customers (c) or in one corner of the customers' region (e); the number of drones ($m$), that is between 1 and 5. For each value of $n$, 30 instances of this class are generated.

Our branch-and-cut algorithm is implemented in C++ compiled with GCC 9.3.0 and built around CPLEX 12.10. All the CPLEX parameters are set to their default values, except the relative MIP gap tolerance \texttt{MIPGap} set to $10^{-6}$ to reach the precision of results reported by the existing metaheuristics. All tests are performed on a desktop operating Kubuntu 20.04, and equipped with an AMD Ryzen 3700X processor and 32GB RAM. We employ single thread for reproducibility purposes. The CPLEX is configured to terminate after three hours of solving an instance. We compare the results of the B\&C with two published state-of-the-art methods, HACO in \citep{Dinh2021} and SISSR in \citep{Nguyen2021}. Each metaheuristic is run 10 times on each instance and the best solution is recorded. We set each run of the metaheuristics to stop after 300 seconds on instances with less than 300 customers, and 600 seconds on the larger instances. The best solutions found by the heuristics are also used to provide good upper bounds for the exact method. The detailed benchmark instances and results of our experiments can be found at \texttt{\href{http://orlab.com.vn/home/download}{http://orlab.com.vn/home/download}}.

\begin{table}
\caption{Experimental results of the B\&C algorithm}
\label{tab:res}
\centering
\scriptsize
\begin{tabular}{lllc|cccc}
\hline 
\multicolumn{4}{c|}{Instance} &  \# Opt  & \# Imp  & gap & time\tabularnewline
Name & $n$ & \# Ins & Source &  &  & (\%) & (s)\tabularnewline
\hline 
 & 10 & 360 & \cite{murray2015flying}& 360 & 0 & 0.00 & 0.01\tabularnewline
 & 20 & 360 & \cite{murray2015flying} &  360 & 0 & 0.00 & 0.06\tabularnewline
\hline 
att48 & 48 & 15 & \cite{mbiadou2018iterative} & 15 & 0 & 0.00 & 0.89\tabularnewline
berlin52 & 52 & 15 & \cite{mbiadou2018iterative} & 15 & 0 & 0.00 & 0.71\tabularnewline
eil101 & 101 & 15 & \cite{mbiadou2018iterative} & 15 & 0 & 0.00 & 12.01\tabularnewline
gr120 & 120 & 15 & \cite{mbiadou2018iterative}& 14 & 0 & 0.00 & 740.17\tabularnewline
pr152 & 152 & 15 & \cite{mbiadou2018iterative} & 15 & 0 & 0.00 & 72.95\tabularnewline
gr229 & 229 & 15 & \cite{mbiadou2018iterative}& 14 & 0 & 0.00 & 1680.39\tabularnewline
\hline 
lin318 & 318 & 15 & this paper & 10 & 4 & 0.01 & 4998.34\tabularnewline
pr439 & 439 & 15 & this paper & 14 & 7 & 0.00 & 3241.04\tabularnewline
rat575 & 575 & 15 & this paper & 7 & 12 & 0.20 & 7885.87\tabularnewline
rat783 & 783 & 15 & this paper & 1 & 5 & 0.42 & 10704.08\tabularnewline
\hline 
 & 400 & 30 & this paper & 16 & 20 & 0.05 & 5940.12\tabularnewline
 & 600 & 30 & this paper & 3 & 16 & 0.23 & 10056.52\tabularnewline
\hline 
\end{tabular}
\end{table}

The obtained results are summarized in Table \ref{tab:res} where Column `Name' shows the name of the instance class (if any), Column `$n$' reports the number of customers, Column `\# Ins' represents the number of instances in the class, and Column `Source' shows the reference where the instances are generated. The performance of the algorithm can be observed via the number of successfully solved instances (Column `\# Opt'), the number of instances in which the B\&C algorithm improves the results of the state-of-the-art metaheuristics (Column `\# Imp'), the averaged gap values returned by CPLEX (Column `gap'), and the running time on average of the algorithm in seconds (Column `time'). 

As can be seen in the table result, the B\&C algorithm can solve to optimality all the small instances of \cite{murray2015flying} in very short running times (less than 1 second in all cases). It is worth recalling that, the best branch-and-cut algorithm in the literature, which is proposed in \cite{Raj2021}, can also successfully solve all of these instances. However, they took 0.86 seconds on average to find the optimal solution on the 20-customers instances compared to our method which took 0.06 seconds on average. To solve the hardest instance, they took 47.95 seconds while we took 0.68 seconds, which is much faster. Although their algorithm is run on an Intel i7-6700 processor, which has 1.2 times lower single-thread performance compared to our computer (Passmark Software 2021), our algorithm is clearly faster.

For the largest existing instances introduced in \cite{mbiadou2018iterative}, our algorithm provides optimal solutions for 88 over 90 instances. For the two instances that we cannot solve to optimality, the gaps are quite small (less than 0.01\%). In \cite{Raj2021}, the authors also test their exact methods on a subset of 18 instances selected from 90 instances of \cite{mbiadou2018iterative}, and only three optimal solutions for the berlin52-based instances are found. This again confirms the good performance of our B\&C.

For the first class of the new instances, our algorithm is capable to optimally solve all pr439-based instances but one, two third lin318-based instances, and nearly a half of rat575-based instances. On the largest instances with 783 customers, our algorithm can successfully solve only one instance. The second class of the new instances tends to be more difficult for the algorithm since we can solve to optimality only 16 over 30 400-customer instances and 3 over 30 600-customer instances. All the valid cuts of the algorithm are activated during the search and constraints (\ref{c8}), (\ref{c20}), (\ref{c21}), and (\ref{2mc}) are created more extensively. Regarding the performance of the two metaheuristics, we observe that SISSR dominates HACO in terms of solution quality. The performance of both heuristic methods tends to decrease markedly when the instance size increases since the exact method can improve 64 solutions of the metaheuristics, all improved cases appear on the new instances. Therefore, there is a huge room for designing new metaheuristics in future researches to efficiently handle large PDSTSP instances.  

\section{Conclusions}
\label{sec-conclusion}

In this research, we propose a simple, yet efficient branch-and-cut algorithm to exactly solve the PDSTSP. The problem is first formulated as a mixed integer linear program with truck-flow variables defined on undirected edges, not on directed arcs as in existing models. The formulation is then strengthened by valid inequalities and the branch-and-cut algorithm is developed. The results obtained on the existing benchmark instances show a good performance of the algorithm when it can successfully solve 88 over 90 instances. We also introduce two new sets of larger instances with up to 783 customers. When being tested on these instances, our exact algorithm can successfully solve 51 over 120 instances. The results also allow to investigate the performance of the state-of-the-art metaheuristics. The SISSIR algorithm is the best metaheuristic and performs well on the instances of the literature since it can find all optimal solutions provided by the B\&C. However, to efficiently solve larger instances including ones introduced in this paper, it is necessary to design a new metaheuristic in future researches.

\bibliography{ref}
\bibliographystyle{model5-names}


\end{document}